\documentclass[12pt,leqno]{article}
\usepackage{amsthm,amsmath}
\usepackage{latexsym}
\usepackage{amssymb}
\usepackage[utf8]{inputenc}
\usepackage[a4paper,left=2.0cm,right=2.0cm,top=2.0cm,bottom=2.0cm]{geometry}
\theoremstyle{plain}
\newtheorem{Thm}{Theorem}

\newtheorem{Prop}[Thm]{Proposition}
\newtheorem{Lem}[Thm]{Lemma}
\theoremstyle{definition}

\renewcommand{\phi}{\varphi}

\newcommand{\RR}{\mathbb{R}}

\begin{document}
\title{A new product on $2\times 2$ matrices}
\author{L. Kramer,  P. Kramer, and V. Man'ko}
\maketitle

\begin{abstract}
We study a bilinear multiplication rule on $2\times 2$ matrices which is intermediate between the ordinary
matrix product and the Hadamard matrix product, and we relate this to the hyperbolic motion group of the plane.
\end{abstract}


The observables of many systems in physics are treated by sets of matrices.
The elements of these matrices are taken from the field $F$ of real or complex numbers.
A multiplicative and bilinear composition rule  
$ (a,b)\longmapsto a*b$ of matrices provides  the system with the structure of an algebra $\bf A$.
The standard composition rule for two matrices is matrix 
multiplication. This composition is given  by  (row by column) multiplication. 
By contrast,  the Hadamard product of matrices \cite{1} is defined by the entry-wise multiplication of elements. 
We shall study and interpret a composition rule intermediate between the standard and the Hadamard case
which appears in \cite{0}. We refer to \cite{3} for further discussions of matrix products.

\section{Properties of the matrix algebra and the $\star$-product.}

Suppose that $F$ is a field. We will be mainly interested in the case where $F$ is the field of real numbers $\bf R$,
but most our results are valid in general (for example, $F$ could also denote the field of complex numbers).
The multiplicative group of the field will be denoted by $F^\times$.
We recall that an algebra is a vector space $\bf A$ over $F$ with a bilinear product $*$ defined on pairs of vectors.
Hence we require for all vectors $u,v,w\in \bf A$ and all scalars $s\in F$ that 
\begin{gather*}
 (u+v)*w=u*w+v*w\\
 w*(u+v)=w*u+w*v\\
 (su)*v=u*(sv)=s(u*v).
\end{gather*}
The algebra $\bf A$ is called associative if 
\[
 u*(v*w)=(u*v)*w
\]
holds for all vectors $u,v,w$.
For example, the $n\times n$-matrices with the usual matrix product form an associative algebra $\mathrm{Mat}(n,F)$, whose 
unit element is the identity matrix $\bf1$.
But other products on matrices have also been studied. 

In this note we 
propose a new product $\star$ on $2\times 2$ matrices and study some of its properties.
The product $\star$ is defined as
\[\tag{1}
 \left(
 \begin{array}{ll}
  a_{11} & a_{12}\\
  a_{21} & a_{22}
\end{array} 
\right)\star 
\left(
 \begin{array}{ll}
  b_{11} & b_{12}\\
  b_{21} & b_{22}
\end{array} 
\right)
=
\left(
 \begin{array}{cc}
  a_{11}b_{11} & a_{11}b_{12}+a_{12}b_{22}\\
  a_{21}b_{11}+a_{22}b_{21} & a_{22}b_{22}
\end{array} 
\right).
\]
In order to study this product in a systematic way we set up the following notation.
We denote $n\times n$-matrices by capital letters $A,B,C$. The identity matrix is denoted by
$\bf 1$ and the zero matrix is denoted by $\bf0$.
Every $n\times n$ square matrix $A$ can be decomposed uniquely as a sum \[A=A_0+A_1\] of a diagonal matrix
$A_0$ and a matrix $A_1$ with zeros on the diagonal.
We define a new product on $n\times n$-matrices by putting 
\[\tag{2}
 A\star B=A_0B_0+(AB)_1.
\]
For the products on the right-hand side we use ordinary matrix multiplication.
Hence entries on the diagonal of $A\star B$ are computed by multiplying diagonal entries, while
the off-diagonal entries are computed in the ordinary matrix multiplication  way.
The star product thus mixes the ordinary matrix product with the Hadamard product.
It is clear from the definition that the product $\star$ is bilinear,
\[A\star (B+C)=A\star B+A\star C\text{ and }(B+C)\star A=B\star A+C\star A.\]
It is also clear that the product $\star$ is for $n=1$ the ordinary multiplication of scalars. 
For $n=2$, this product given in (2) coincides with the one given in equation (1).
\begin{Lem}
The identity matrix $\bf1$ is a unit element, ${\bf1}\star A=A=A\star{\bf1}$.
The product $\star$ is associative for $n=2$, but not associative for $n\geq 3$.
For $n=2$ we have 
\[\tag{3}
 A\star B=A_0B_0+A_0B_1+A_1B_0.
\]
\end{Lem}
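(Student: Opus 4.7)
The statement packages four separate assertions, and I would attack them in the order they appear, in effect using each to feed the next.

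\textbf{Unit property.} Since $\mathbf{1}$ is diagonal, its decomposition is $\mathbf{1}_0=\mathbf{1}$, $\mathbf{1}_1=\mathbf{0}$. Plugging into the definition (2) gives $\mathbf{1}\star A = \mathbf{1}\cdot A_0+(\mathbf{1}A)_1 = A_0+A_1 = A$, and $A\star\mathbf{1}$ is symmetric. This works for every $n$ and is immediate.

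\textbf{Formula (3) for $n=2$.} The content here is the observation that a product of two off-diagonal $2\times 2$ matrices is diagonal: if $A_1=\left(\begin{smallmatrix}0&a\\b&0\end{smallmatrix}\right)$ and $B_1=\left(\begin{smallmatrix}0&c\\d&0\end{smallmatrix}\right)$, then $A_1B_1=\operatorname{diag}(ad,bc)$. Expanding $AB=(A_0+A_1)(B_0+B_1)$ into four pieces, the two summands $A_0B_0$ and $A_1B_1$ are diagonal and the other two are off-diagonal, so $(AB)_1 = A_0B_1+A_1B_0$. Substituting into (2) gives (3).

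\textbf{Associativity for $n=2$.} Using (3), read off $(A\star B)_0=A_0B_0$ and $(A\star B)_1=A_0B_1+A_1B_0$. Apply (3) once more to get
\[
(A\star B)\star C = A_0B_0C_0+A_0B_0C_1+A_0B_1C_0+A_1B_0C_0.
\]
A symmetric expansion of $A\star(B\star C)$ using $(B\star C)_0=B_0C_0$ and $(B\star C)_1=B_0C_1+B_1C_0$ produces the same four terms (in the same diagonal-vs-off-diagonal roles), so the two triple products coincide.

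\textbf{Non-associativity for $n\geq 3$.} This is the step where the $2\times 2$ miracle breaks, because the product of two off-diagonal matrices can now have off-diagonal entries, invalidating formula (3). I expect this to be the main obstacle: simple candidates like $E_{12},E_{23},E_{31}$ produce vanishing star products on both sides, so one has to choose the three matrices so that off-diagonal times off-diagonal gives an off-diagonal result on one bracketing but a diagonal (hence killed by the $(\,\cdot\,)_1$) result on the other. The cleanest choice is $A=E_{12}$, $B=E_{23}$, $C=E_{32}$ (embed by zeros for larger $n$). Then $A\star B=(E_{13})_1=E_{13}$, so $(A\star B)\star C=(E_{13}E_{32})_1=E_{12}$, whereas $B\star C=(E_{22})_1=\mathbf{0}$, giving $A\star(B\star C)=\mathbf{0}$. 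The two sides differ, which settles the claim.
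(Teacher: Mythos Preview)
Your proof is correct and follows essentially the same route as the paper: the unit property from formula~(2), the key observation that $A_1B_1$ is diagonal when $n=2$ to obtain~(3), the direct expansion of both triple products, and the very same counterexample $A=E_{12}$, $B=E_{23}$, $C=E_{32}$ (embedded by zeros for $n>3$) to break associativity. The only differences are expository: you spell out the unit computation and remark on why naive choices like $E_{12},E_{23},E_{31}$ fail, neither of which the paper does.
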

\begin{proof}
The fact that $\bf1$ is a unit is clear from formula (2). 
We always have \[(AB)_1=A_1B_0+A_0B_1+(A_1B_1)_1,\]
because $(A_0B_0)_1=\bf0$.
If $n=2$, then the product of two
matrices with zeros on the diagonal is a diagonal matrix, whence $(A_1B_1)_1={\bf0}$ in this case. 
Therefore we have for $n=2$ the formula
\[
 A\star B=A_0B_0+A_0B_1+A_1B_0.
\]
and thus 
\begin{multline*}
 (A\star B)\star C=(A_0B_0+A_0B_1+A_1B_0)\star C
 \\
 =A_0B_0C_0+A_0B_0C_1+A_0B_1C_0+A_1B_0C_0=A\star(B\star C).
\end{multline*}
For $n=3$ we have, however,
\[
 \left(\left(\begin{array}{lll}0&1&0\\0&0&0\\0&0&0\end{array}\right)
 \star 
 \left(\begin{array}{lll}0&0&0\\0&0&1\\0&0&0\end{array}\right)\right)
 \star
 \left(\begin{array}{lll}0&0&0\\0&0&0\\0&1&0\end{array}\right)=
 \left(\begin{array}{lll}0&1&0\\0&0&0\\0&0&0\end{array}\right)
\] 
and
\[
 \left(\begin{array}{lll}0&1&0\\0&0&0\\0&0&0\end{array}\right)
 \star 
 \left(\left(\begin{array}{lll}0&0&0\\0&0&1\\0&0&0\end{array}\right)
 \star
 \left(\begin{array}{lll}0&0&0\\0&0&0\\0&1&0\end{array}\right)\right)=
 \left(\begin{array}{lll}0&0&0\\0&0&0\\0&0&0\end{array}\right) \rlap,
\] 
which shows that $\star$ is not associative. The same example works for all $n\geq 3$ by
extending these three $3\times 3$ matrices with zeros to $n\times n$ matrices.
\end{proof}

\section{The group of $\star$-invertible matrices}
Now we study the $\star$-invertible matrices, for $n=2$.
\begin{Lem}
 Suppose that $n=2$. Then $A=A_0+A_1$ is $\star$-invertible if and only of $A_0$ is invertible
 in the ordinary sense. The $\star$-inverse $B$ of $A=A_0+A_1$ is then $B=A_0^{-1}-A_0^{-1}A_1A_0^{-1}$. 
\end{Lem}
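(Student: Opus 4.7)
The plan is to exploit formula (3) from the previous lemma, which reduces $\star$-invertibility to a linear system in the diagonal and off-diagonal parts of the putative inverse.

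First I would observe that for $n=2$, whenever $D$ is a diagonal matrix and $M$ has zero diagonal, both products $DM$ and $MD$ again have zero diagonal (this is immediate from the $2\times2$ entry formulas). Applying this to $A_0B_1$ and $A_1B_0$ in formula (3), I get the clean decomposition
\[
 (A\star B)_0 = A_0B_0, \qquad (A\star B)_1 = A_0B_1 + A_1B_0.
\]

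Next, writing the requirement $A\star B = {\bf 1}$ and splitting into diagonal and off-diagonal pieces, I obtain the two equations $A_0B_0={\bf 1}$ and $A_0B_1+A_1B_0={\bf 0}$. The first equation forces $A_0$ to be invertible in the ordinary sense and fixes $B_0=A_0^{-1}$; plugging this into the second equation and left-multiplying by $A_0^{-1}$ gives $B_1=-A_0^{-1}A_1A_0^{-1}$. This both establishes the necessity of the invertibility of $A_0$ and yields the explicit formula for $B$.

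Finally I would verify that this $B$ is a two-sided $\star$-inverse by computing $B\star A$ using the same decomposition: $(B\star A)_0 = B_0A_0 = {\bf 1}$ and $(B\star A)_1 = B_0A_1 + B_1A_0 = A_0^{-1}A_1 - A_0^{-1}A_1A_0^{-1}A_0 = {\bf 0}$. There is no real obstacle here; the only subtlety worth highlighting is the observation that products of a diagonal matrix with a hollow matrix stay hollow in the $2\times2$ case (this is exactly the special feature that made $\star$ associative in the previous lemma, and it is what makes the system decouple into independent equations for $B_0$ and $B_1$).
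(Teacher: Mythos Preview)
Your argument is correct. The derivation of $B$ by splitting $A\star B={\bf1}$ into its diagonal and off-diagonal parts is clean, and the equation $A_0B_0={\bf1}$ indeed forces $A_0$ to be invertible (a one-sided inverse for a square matrix over a field is two-sided).

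The paper argues slightly differently. For sufficiency it simply writes down $B=A_0^{-1}-A_0^{-1}A_1A_0^{-1}$ and checks $A\star B={\bf1}$ and $B\star A={\bf1}$ by a one-line computation, without first deriving $B$ from a system. For necessity it does \emph{not} use your observation that the diagonal part of $A\star B$ is $A_0B_0$; instead, when $A_0$ is singular it exhibits an explicit nonzero zero-divisor: letting $B_0$ be $A_0$ with its diagonal entries swapped (so $A_0B_0={\bf0}$) and setting $B=B_0-A_1$, one gets $A\star B={\bf0}$. Your route is more systematic and explains where the inverse formula comes from; the paper's route yields the extra information that a non-$\star$-invertible $A$ is actually a $\star$-zero-divisor. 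One small side remark: the fact that a diagonal matrix times a hollow matrix is hollow holds for every $n$, not just $n=2$; the genuinely $2\times2$ phenomenon used in the previous lemma is that hollow times hollow is diagonal.
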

\begin{proof}
If $A_0$ is invertible, we have 
 \[
  (A_0+A_1)\star(A_0^{-1}-A_0^{-1}A_1A_0^{-1})={\bf1}+A_1A_0^{-1}-A_1A_0^{-1}={\bf1}.
 \]
and similarly $(A_0^{-1}-A_0^{-1}A_1A_0^{-1})\star(A_0+A_1)={\bf1}$.
Hence $A=A_0+A_1$ is $\star$-invertible with $\star$-inverse $B=A_0^{-1}-A_0^{-1}A_1A_0^{-1}$.

If $A_0$ is not invertible, let $B_0$ denote the diagonal matrix where the two diagonal entries of $A_0$ are
exchanged. Then $A_0B_0=\bf0$ and the matrix $B=B_0-A_1$ satisfies 
\[
 A\star B=A_0B_0-A_0A_1+A_1B_0={\bf0},
\]
hence $A$ cannot be $\star$-invertible if $A_0$ is not invertible.
\end{proof}
We let $G$ denote the group of all $\star$-invertible matrices of our algebra, for $n=2$.
Every $\star$-invertible element is of the form
\[\tag{3}
 A=A_0+A_1=A_0({\bf 1}+A_0^{-1}A_1)=A_0\star({\bf1}+B_1),\text{ where }B_1=A_0^{-1}A_1.
\]
Let $D$ denote the set of all $2\times 2$ diagonal matrices.
On this set $D$ of diagonal matrices, the $\star$-product and the usual matrix product coincide.
The $\star$-invertible matrices in $D$ thus form a commutative subgroup $H$ of $G$.
Let $N$ denote the set of all matrices of the form
$B={\bf1}+B_1$. These matrices are $\star$-invertible and they form 
a group, with group law \[({\bf1}+B_1)\star({\bf1}+C_1)={\bf1}+B_1+C_1.\]
Hence the group $N$ is also commutative and isomorphic to the additive group $E=F\times F$.
\begin{Lem}
The group $G$ is the semidirect product of $H$ and the invariant subgroup $N$,
\[
 G=HN=H\ltimes N.
\]
\end{Lem}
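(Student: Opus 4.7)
The plan is to verify the three standard conditions for a semidirect product: $G=HN$, $H\cap N=\{\bf1\}$, and $N\trianglelefteq G$. The first is essentially free from what is already on the page, the second is an inspection, and the third reduces to a one-line conjugation computation using formula (3) of Lemma~2.

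First I would argue that $G=HN$. By Lemma~2, any $A=A_0+A_1\in G$ has $A_0$ ordinarily invertible, so $A_0\in H$; setting $B_1=A_0^{-1}A_1$ and applying formula~(3) of the preceding display one gets $A=A_0\star({\bf1}+B_1)$, expressing $A$ as the $\star$-product of an element of $H$ and an element of $N$. Next, $H\cap N=\{\bf1\}$: an element of $H$ is diagonal, while an element of $N$ has the form ${\bf1}+B_1$ with $B_1$ strictly off-diagonal, so the only matrix lying in both is $\bf1$.

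The remaining point is that $N$ is normalised by $G$. Since $G=HN$ and $N$ is abelian (hence self-normalising), it suffices to check that $hNh^{-1}\subseteq N$ for $h\in H$. For $h\in H$ and ${\bf1}+B_1\in N$, using $A\star B=A_0B_0+A_0B_1+A_1B_0$ one computes
\[
 h\star({\bf1}+B_1)\star h^{-1}=(h+hB_1)\star h^{-1}={\bf1}+hB_1h^{-1},
\]
and $hB_1h^{-1}$ is again strictly off-diagonal because conjugation by a diagonal matrix preserves the off-diagonal support. Hence the conjugate lies in $N$, so $N$ is a normal subgroup of $G$.

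Combining the three facts gives $G=H\ltimes N$. I do not anticipate a genuine obstacle; the only step requiring care is the conjugation calculation, where one must remember to use the $\star$-product rather than the ordinary product, but formula~(3) reduces this to the shown one-line expansion.
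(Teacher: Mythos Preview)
Your proof is correct and follows essentially the same route as the paper: both verify $G=HN$, $H\cap N=\{\mathbf{1}\}$, and normality of $N$ via the conjugation $h\star(\mathbf{1}+B_1)\star h^{-1}=\mathbf{1}+hB_1h^{-1}$ (the paper phrases this as $A_0\star B=\tilde B\star A_0$ with $\tilde B=\mathbf{1}+A_0B_1A_0^{-1}$). One terminological quibble: ``self-normalising'' usually means $N_G(N)=N$, which is false here; what you actually need---and what holds trivially for any subgroup---is just that $N$ normalises itself, so that checking conjugation by $H$ suffices once $G=HN$.
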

\begin{proof}
The subgroups $H,N\subseteq G$ have obviously trivial intersection $H\cap N=\{{\bf1}\}$
and we noted above in equation (3) that $G=HN$.
For $A=A_0$ in $H$ and $B={\bf1}+B_1$ in $N$ we have 
\[A\star B=A_0+A_0B_1=({\bf 1}+A_0B_1A_0^{-1})\star A_0=\tilde B\star A,\] which shows that $N\unlhd G$
is an invariant subgroup in $G$. Hence $G=H\ltimes N$.
\end{proof}

\section{A geometric interpretation of the group $\boldsymbol G$}
For this last section we assume that 
\[
 1+1\neq 0,
\]
which is certainly true for the case that $F$ is the field of real numbers $\bf R$.
We denote by $\mathrm{SO}(1,1)$ the group of all $2\times 2$ matrices of determinant $1$ which leave the bilinear form 
\[
 b(u,v)=u_1v_1-u_2v_2
\]
on the $2$-dimensional vector space $E=F\times F$ invariant. 
This group is abelian and consists of all matrices of the form
\[\tag{4}
R=\left(
 \begin{array}{ll}
  c & s\\
  s & c
\end{array} 
\right)
\quad\text{ with }\quad c^2-s^2=1.
\]
For the case $F=\bf R$ the group $\mathrm{SO}(1,1)$ has a subgroup $\mathrm{SO}^+(1,1)$ of index $2$ consisting of all matrices of the form 
\[\tag{5}
\phi(t)=\left(
 \begin{array}{ll}
  \cosh(t) & \sinh(t)\\
  \sinh(t) & \cosh(t)
\end{array} 
\right)
\quad\text{ with }t\in\RR,
\]
and the one parameter group $t\longmapsto \phi(t)$ is a Lie group isomorphism 
\[
\phi:{\bf R}\xrightarrow{\ \cong\ } \mathrm{SO}^+(1,1).
\]
In general. the map $\psi:F^\times\longrightarrow\mathrm{SO}(1,1)$ that maps the nonzero scalar $x$ to the matrix
\[
 \psi(x)=\left(
 \begin{array}{ll}
  \frac{1}{2}(x+x^{-1}) & \frac{1}{2}(x-x^{-1})\\
  \frac{1}{2}(x-x^{-1}) & \frac{1}{2}(x+x^{-1})
\end{array} 
\right)
\]
is a group isomorphism $\psi:F^\times\xrightarrow{\ \cong\ }\mathrm{SO}(1,1)$.

The hyperbolic motion group $\mathrm{ISO}(1,1)$ consists of all affine transformations
$ T_{[R,u]}$ of the $2$-dimensional plane $E=F\times F$ of the form
\[
T_{[R,u]}(x)=Rx+u,
\]
where $R$ is a $2\times2$ matrix in $\mathrm{SO}(1,1)$ and $u$ is a vector in $E$.
Such a transformation consists thus of a hyperbolic rotation $R$ followed by a translation by a vector $u$.
The group law on $\mathrm{ISO}(1,1)$ is thus
\[\tag{6}
T_{[R,u]}*T_{[S,v]}=T_{[RS,u+Rv]}.
\]

We define three auxiliary maps $\alpha,\beta,\gamma$ as follows.
We put 
\[
 \alpha
 \left(
 \begin{array}{ll}
  x & 0 \\
  0 & y
\end{array} 
\right)=
\left(
 \begin{array}{ll}
  \frac{1}{2}(\frac{x}{y}+\frac{y}{x}) & \frac{1}{2}(\frac{x}{y}-\frac{y}{x})\\
  \frac{1}{2}(\frac{x}{y}-\frac{y}{x}) & \frac{1}{2}(\frac{x}{y}+\frac{y}{x})
\end{array} 
\right).
\]
Then $\alpha:H\longrightarrow\mathrm{SO}(1,1)$ is a surjective group homomorphism whose kernel
consists of all diagonal matrices of the form $s\bf1$, with $s\neq 0$.
We also put 
\[
 \beta
 \left(
 \begin{array}{ll}
  1 & p \\
  q & 1
\end{array} 
\right)=
\left(
 \begin{array}{c}
  p+q\\p-q
\end{array} 
\right)
\]
and we note that $\beta$ is a group isomorphism $N\xrightarrow{\ \cong\ } E$.
Finally, we put 
\[
 \gamma
 \left(
 \begin{array}{ll}
  x & 0 \\
  0 & y
\end{array} 
\right)=x
\]
and we note that $\gamma:H\longrightarrow F^\times$ is a surjective group homomorphism whose
kernel consists of all diagonal matrices of the form
$\left(
 \begin{smallmatrix}
  1 & 0 \\
  0 & y
\end{smallmatrix} 
\right)$.
We define a map $\Phi:G\longrightarrow \mathrm{ISO}(1,1)$ by
\[\tag{7}
\Phi(A)=\Phi(A_0+A_1)=T_{[\alpha(A_0),\beta({\bf1}+A_1A_0^{-1})]}
=T_{[\alpha(A_0),\beta(AA_0^{-1})]}.
\]
\begin{Prop}
The map \[\Phi:G\longrightarrow \mathrm{ISO}(1,1)\]
is a surjective group homomorphism whose kernel consists of all matrices of the form 
$s{\bf1}$ with $s\neq 0$.
\end{Prop}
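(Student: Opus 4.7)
The plan is to exploit the semidirect product decomposition $G = H \ltimes N$ from the previous lemma together with the analogous decomposition $\mathrm{ISO}(1,1) = \mathrm{SO}(1,1) \ltimes E$ encoded in (6). The map $\Phi$ is manifestly built from $\alpha$ applied to the $H$-part of $A$ and $\beta$ applied to the $N$-part $AA_0^{-1}$. Since $\alpha: H \to \mathrm{SO}(1,1)$ is already known to be a surjective homomorphism with kernel $\{s\mathbf{1} : s \neq 0\}$ and $\beta: N \to E$ is a group isomorphism, the only new content is that these two pieces assemble compatibly; surjectivity and the kernel description will follow directly.

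The single non-routine step I would isolate as a lemma is an equivariance statement: for $A_0 \in H$ and $M = \mathbf{1} + C_1 \in N$,
\[
\beta\bigl(A_0 \star M \star A_0^{-1}\bigr) = \alpha(A_0) \cdot \beta(M),
\]
i.e.\ the conjugation action of $H$ on $N$ inside $G$ is intertwined by $(\alpha,\beta)$ with the defining linear action of $\mathrm{SO}(1,1)$ on $E$. With $A_0 = \mathrm{diag}(x,y)$, $r = x/y$, and $C_1$ having off-diagonal entries $p,q$, a straight calculation from the definitions shows that both sides equal $(rp + r^{-1}q,\ rp - r^{-1}q)^{T}$. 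This is the main obstacle, in that it forces one to see why the specific formula defining $\alpha$ was chosen.

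Granted equivariance, the homomorphism property $\Phi(A \star B) = \Phi(A) * \Phi(B)$ falls out by rewriting. Using formula (3), $(A\star B)_0 = A_0 B_0$, and
\[
(A\star B)(A_0B_0)^{-1} \;=\; \mathbf{1} + A_1 A_0^{-1} + A_0\bigl(B_1 B_0^{-1}\bigr)A_0^{-1}.
\]
Applying $\beta$, which is additive on $N$, and invoking equivariance for the third summand yields translation component $\beta(AA_0^{-1}) + \alpha(A_0)\,\beta(BB_0^{-1})$, which is exactly the translation component of $\Phi(A)*\Phi(B)$ under rule (6); the rotation parts agree because $\alpha$ is a homomorphism on $H$.

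For surjectivity, given $T_{[R,u]} \in \mathrm{ISO}(1,1)$ I would pick $A_0 \in H$ with $\alpha(A_0) = R$ and $M \in N$ with $\beta(M) = u$; then $A := M \star A_0$ is in $G$ with diagonal part $A_0$ and $AA_0^{-1} = M$, so $\Phi(A) = T_{[R,u]}$. For the kernel, $\Phi(A) = T_{[\mathbf{1},0]}$ forces $\alpha(A_0) = \mathbf{1}$, whence $A_0 = s\mathbf{1}$ with $s \neq 0$ by the known kernel of $\alpha$, and $\beta(AA_0^{-1}) = 0$, whence $A_1A_0^{-1} = 0$ and so $A_1 = 0$ by injectivity of $\beta$ and invertibility of $A_0$. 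Thus the kernel is exactly $\{s\mathbf{1} : s \neq 0\}$, as claimed.
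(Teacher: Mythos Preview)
Your proof is correct and follows essentially the same approach as the paper: both exploit the semidirect product decompositions of $G$ and of $\mathrm{ISO}(1,1)$, and your equivariance lemma is exactly the content of the paper's verification of the mixed case $A\in H$, $B\in N$, producing the same vector $(zp+z^{-1}q,\ zp-z^{-1}q)^{T}$ with $z=x/y$. Your write-up is in fact slightly more complete, since you spell out the surjectivity and kernel arguments that the paper merely asserts.
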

\begin{proof}
We have to verify that $\Phi(A\star B)=\Phi(A)*\Phi(B)$ holds for all $A,B\in G$.
If $A=A_0$ and $B=B_0$, then 
\[
\Phi(A)*\Phi(B)=T_{[\alpha(A_0),0]}*T_{[\alpha(B_0),0]}=T_{[\alpha(A_0)\alpha(B_0),0]}=T_{[\alpha(A_0B_0),0]}=\Phi(A\star B). 
\]
Similarly, if $A={\bf1}+A_1$ and $B={\bf1}+B_1$, then 
\[
\Phi(A)*\Phi(B)=T_{[{\bf1},\beta(A)]}*T_{[{\bf1},\beta(B)]}=T_{[{\bf1},\beta(A)+\beta(B)]}=T_{[{\bf1},\beta(AB)]}=\Phi(A\star B).
\]
Hence the map $\Phi$ is a group homomorphism both on $H$ and on $N$.

Since $G=HN$ is a semidirect product, it remains to show that 
$\Phi(A\star B)=\Phi(A)*\Phi(B)$ holds for all matrices $A,B$ of the form 
$A=A_0\in H$ and $B={\bf1}+B_1\in N$.
We put
\[A=\left(\begin{matrix}
  x & 0 \\
  0 & y
\end{matrix} \right)
\quad\text{ and }\quad
B=\left(
 \begin{matrix}
  1 & p \\
  q & 1
\end{matrix} 
\right).
\]
Then $\alpha(A)=
\left(
 \begin{smallmatrix}
  c & s \\
  s & c
\end{smallmatrix} 
\right)$,
where $c=\frac{1}{2}(\frac{x}{y}+\frac{y}{x})$ and $s=\frac{1}{2}(\frac{x}{y}-\frac{y}{x})$.
We also put $z=c+s=\frac{x}{y}$, and we compute
\[
 \Phi(A)*\Phi(B)=T_{\left[\left(
 \begin{smallmatrix}
  c & s\vphantom{0} \\
  s & c
\end{smallmatrix} \right),\left(\begin{smallmatrix} 0 \\ 0 \end{smallmatrix}\right)\right]}
*
T_{\left[\left(
 \begin{smallmatrix}
  1 & 0 \\
  0 & 1
\end{smallmatrix} \right),\left(\begin{smallmatrix} p+q\\p-q\end{smallmatrix}\right)\right]}=
 T_{\left[\left(
 \begin{smallmatrix}
  c & s\vphantom{pz^{-1}} \\
  s & c\vphantom{pz^{-1}}
\end{smallmatrix} \right),\left(\begin{smallmatrix} zp+z^{-1}q\\ zp-z^{-1}q\end{smallmatrix}\right)\right]}.
\]
On the other hand, $A\star B=\left(\begin{smallmatrix} x & 0 \\ 0 & y \end{smallmatrix}\right)+
\left(\begin{smallmatrix} 0 & xp \\ yq & 0 \end{smallmatrix}\right)$.
Thus 
\[
 \Phi(A\star B)=T_{\left[\left(
 \begin{smallmatrix}
  c & s \vphantom{pz^{-1}} \\
  s & c \vphantom{pz^{-1}}
\end{smallmatrix} \right), \left(\begin{smallmatrix} zp+z^{-1}q\\ zp-z^{-1}q\end{smallmatrix}\right)\right]}.
\]
The kernel of $\Phi$ consists of the multiples $s\bf 1$ of the identity matrix.
This finishes the proof of the proposition.
\end{proof}

By a well-known construction, see \cite{2}, one can describe the $2$-dimensional affine transformations by $3\times 3$ matrices. We introduce
a third coordinate which is set to $1$.
Then $T_{[R,u]}$ corresponds to the $3\times 3$~matrix 
\[
\widetilde T_{[R,u]}=\begin{pmatrix} r_{11} & r_{12} & u_1 \\ r_{21} & r_{22} & u_2 \\ 0 & 0 & 1\end{pmatrix}.
\]
From this we obtain a faithful $3$-dimensional representation $\rho$ of $G$ via 
\[\tag{8}
\rho\begin{pmatrix} x & p \\ q & y \end{pmatrix}  =
x\begin{pmatrix}
 c & s & zp+z^{-1}q\\
 s & c & zp-z^{-1}q \\
 0 & 0 & 1
\end{pmatrix},
\]
where again $c=\frac{1}{2}(\frac{x}{y}+\frac{y}{x})$ and $s=\frac{1}{2}(\frac{x}{y}-\frac{y}{x})$ and   $z=c+s=\frac{x}{y}$.

\section{Conclusion.}

We have shown that the matrix group $G$, equipped with the $\star$
multiplication,  is a semidirect product of two abelian groups. 

By a $3$-dimensional representation, the elements of the group $G$  can be  interpreted as
planar hyperbolic rotations, followed by  translations.  
The $\star$ multiplication of matrices can  be 
interpreted as the composition rule of these elements.

\end{document}